\documentclass[12pt,draftcls,peerreview,onecolumn]{IEEEtran}

\usepackage{cite}
\usepackage{subfigure}
\usepackage{graphicx}
\usepackage{epsfig,color}
\usepackage{url}
\usepackage{amsmath}
\usepackage{amssymb}
\interdisplaylinepenalty=2500

\hyphenation{op-tical net-works semi-conduc-tor}

\newtheorem{theorem}{Theorem}
\newtheorem{lemma}[theorem]{Lemma}
\newtheorem{cor}[theorem]{Corollary}

\begin{document}

\title{On the Shannon capacity and queueing stability of random access multicast}

\author{Brooke~Shrader,~\IEEEmembership{Student Member, IEEE}
        and~Anthony~Ephremides,~\IEEEmembership{Fellow,~IEEE}
\thanks{B. Shrader and A. Ephremides are with the University of Maryland, College
Park. Portions of the material in this paper were presented at the
2005 and 2006 International Symposium on Information Theory~(ISIT)
and at the Fall 2006 Information Theory Workshop~(ITW).} }


\maketitle

\begin{abstract}
We study and compare the Shannon capacity region and the stable
throughput region for a random access system in which source nodes
multicast their messages to multiple destination nodes. Under an
erasure channel model which accounts for interference and allows for
multipacket reception, we first characterize the Shannon capacity
region. We then consider a queueing-theoretic formulation and
characterize the stable throughput region for two different
transmission policies: a retransmission policy and random linear
coding. Our results indicate that for large blocklengths, the random
linear coding policy provides a higher stable throughput than the
retransmission policy.
Furthermore, our results provide an example of a transmission policy
for which the Shannon capacity region strictly outer bounds the
stable throughput region, which contradicts an unproven conjecture
that the Shannon capacity and stable throughput coincide for random
access systems.
\end{abstract}

\begin{keywords}
wireless multicast, random access, ALOHA, queueing, stability,
throughput, capacity, retransmissions, random linear coding
\end{keywords}

%

\IEEEpeerreviewmaketitle

\section{Introduction}




A fundamental question of communication theory is: at what rate can
information be transmitted reliably over a noisy channel? There is
more than one way to go about answering this question. For instance,
consider an erasure channel where the parameter $\epsilon$ denotes
the probability with which a transmission on the channel is lost;
with probability $1-\epsilon$ the transmission is received without
error. The traditional approach for describing the rate of reliable
communication for the erasure channel is to cite its Shannon
capacity, which is $1-\epsilon$ bits per channel use for a channel
with binary inputs and outputs. If feedback is available to notify
the sender when a channel input is erased, then the capacity can be
achieved by retransmitting lost inputs \cite{CoverThomas}.

Alternatively, the rate of reliable communication can be described
by the {\it maximum stable throughput}. In this setting, we view the
channel input as a packet of fixed length that arrives at a random
time to a sender and is stored in an infinite-capacity buffer while
awaiting its turn to be sent over the channel. The packets in the
buffer form a queue that is emptied in a specified order,
traditionally in first-in-first-out (FIFO) order. A transmission
protocol, for instance an Automatic Repeat Request (ARQ) protocol,
provides a form of redundancy to ensure that the packet is received
correctly. The maximum stable throughput is the highest rate (in
packets per channel use) at which packets arrive to the sender while
ensuring that the queue remains finite. In the case of the erasure
channel, if we assume that packets arrive to the sender according to
a Bernoulli process, feedback is available to notify the sender of
lost packets, and the transmission protocol consists of
retransmitting lost packets, then the buffer at the sender forms a
discrete-time $M/M/1$ queue with maximum departure rate
$1-\epsilon$. The maximum stable throughput is $1-\epsilon$, which
is identical to the Shannon capacity.

For {\it multiuser systems}, the relation between the Shannon
capacity and the maximum stable throughput for communication over an
erasure channel has been explored in the context of random multiple
access, where user $n$ transmits with probability $p_n$ at each
transmission opportunity. This form of random channel access is a
variation on Abramson's ALOHA protocol \cite{Abramson70} and is
particularly attractive for use in mobile ad-hoc networks because it
is robust to variations in the network topology and can be
implemented without coordination among the transmitting nodes.

The finite-user, buffered random access problem was first formulated
by Tsybakov and Mikhailov \cite{TsybakovMikhailov}, who provided a
sufficient condition for stability and thus a lower bound on the
maximum stable throughput. The problem they considered was a system
in which finitely many source nodes with infinite-capacity queues
randomly access a shared channel to send messages to a central
station. Feedback was used to notify the source nodes of failed
transmissions and a retransmission scheme was used to ensure
eventual successful reception at the central station. The users were
assumed to access a {\it collision channel}, in which transmission
by more than one source results in the loss of all transmitted
packets with probability 1. This collision channel model is
equivalent to an erasure channel, where for user $n$, $1-\epsilon =
p_n \prod_{j \neq n} (1-p_j)$. Further progress on this problem was
made in \cite{RaoAE88}, in which stochastic dominance arguments were
explicitly introduced to find the stable throughput region for a
system with 2 source nodes, and in \cite{Szpa94}, wherein a stable
throughput region based on the joint queue statistics was found for
3 source nodes. An exact stability result for arbitrarily (but
finitely) many users has not been found, but bounds have been
obtained in \cite{Szpa94} and \cite{LuoAE99}. Recently, the authors
of \cite{NawareEtAl03} improved upon the collision channel model
used in all previous works and studied a channel with multipacket
reception (MPR) capability. They showed that the stable throughput
region transfers from a non-convex region under the collision
channel model to a convex region bounded by straight lines for a
channel with MPR.

The Shannon capacity region of a random access system was considered
in \cite{MasseyMathys} and \cite{Hui}, which both obtained the
capacity region for finitely many source nodes transmitting to a
central station under the collision channel model. That capacity
result can be viewed as the capacity of an asynchronous multiple
access channel, which was obtained in \cite{Poltyrev83} and
\cite{HuiHumblet85}. The more recent contribution of \cite{Tinguely}
shows explicitly how the random access capacity region in
\cite{MasseyMathys, Hui} is obtained from the results in
\cite{Poltyrev83, HuiHumblet85}, in addition to analyzing the
capacity for a channel in which packets involved in a collision can
be recovered.

It was noted by Massey and Mathys \cite{MasseyMathys} and Rao and
Ephremides \cite{RaoAE88} that the stable throughput and Shannon
capacity regions coincide for the special cases of two source nodes
and infinitely-many source nodes. As with the point-to-point erasure
channel, this result is surprising in that it suggests that the
bursty nature of arriving packets, which is captured in the
stability problem but not in the capacity problem, is immaterial in
determining the limits on the rate of reliable communication. It has
been conjectured that the stable throughput region for finitely-many
source nodes transmitting to a central station (which is an unsolved
problem) coincides with the corresponding capacity region. This
conjecture was explored in \cite{Anantharam}, in which it was shown
to hold in a special case involving correlated arrivals of packets
at the source nodes. Recently, further progress was made in
\cite{RockeyAE_ITTrans06} towards showing that the stable throughput
and Shannon capacity regions coincide for transmission over a
channel with MPR. However, a complete proof has still not been
found.

In this work we explore the relation between the stable throughput
and Shannon capacity for a random access system in which source
nodes {\it multicast} their messages to multiple receivers.
Specifically, we consider a system with two source nodes and two
destination nodes. The source nodes randomly access the channel to
transmit to the two destination nodes. We first characterize the
Shannon capacity region of the system, which is similar to
Ahlswede's result \cite{Ahlswede74} on the capacity for two senders
and two receivers. We then move to the queueing stability problem
and characterize the stable throughput region for our random access
multicast system with two different transmission policies. The first
is a retransmission policy, which is the policy used in previous
works on queueing stability for a (single) central station. Next, we
study random access multicast with random linear coding, which is
inspired by the recent development of network coding
\cite{NetCod1,NetCod2} and fountain coding
\cite{Luby02,Shokrollahi06}. Our results show that for multicast
transmission, the maximum stable throughput under the retransmission
policy does not reach the Shannon capacity, however, the random
linear coding scheme provides a maximum stable throughput that
asymptotically approaches the Shannon capacity.

\section{System model} \label{section:Model}

\begin{figure}
\centering
\includegraphics[width=0.3\textwidth]{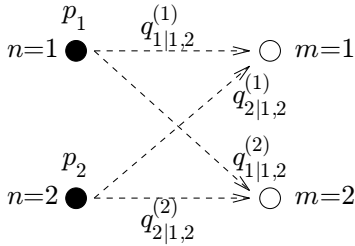}
\caption{The multicast scenario we consider in this work. Two source
nodes randomly access the channel to multicast to two destination
nodes. When both sources transmit, which happens in a slot with
probability $p_1p_2$, the reception probabilities are as shown
above.} \label{fig:general_2sour2dest}
\end{figure}

The system we consider throughout this work is shown in Fig.
\ref{fig:general_2sour2dest}. Two source nodes, indexed by $n$, each
generate messages to be multicast to two destination nodes, indexed
by $m$, over a shared channel. The data generated at source $n=1$ is
assumed independent of the data generated at source $n=2$. Time is
slotted; one time slot corresponds to the amount of time needed to
transmit a single packet over the shared channel. A packet is a
fixed-length vector of bits. In each time slot, if source $n$ has a
packet to transmit, then we refer to the source as being {\it
backlogged}; otherwise the source is {\it empty}. A backlogged
source transmits in a slot with probability $p_n$. We refer to $p_n$
as the transmission probability; it encapsulates random access to
the channel. We assume the value of $p_n$ to be fixed in time.
(I.e., we do not assume retransmission control, in which $p_n$ is
varied over time according to the history of successful
transmissions.)

The channel model we consider is similar to the model used in
\cite{NawareEtAl03}. A transmitted packet is received without error
with a certain probability. Otherwise, the packet is lost and cannot
be recovered. We assume that the channels between different
source-destination pairs are independent. We introduce the following
{\it reception probabilities} for sources $n=1,2$ and destinations
$m=1,2$.
\begin{equation}
q_{n|n}^{(m)}  = \mbox{Pr}\{\mbox{packet from $n$ is received at
$m$} | \mbox{ only $n$ transmits}\}
\end{equation}
\begin{equation}
q_{n|1,2}^{(m)}  = \mbox{Pr}\{\mbox{packet from $n$ is received at
$m$} | \mbox{ both sources transmit}\}
\end{equation}
We assume throughout that interference cannot increase the reception
probability on the channel, i.e., $q_{n|n}^{(m)}
> q_{n|1,2}^{(m)}$. The reception probabilities inherently account
for interference and also allow for multipacket reception (MPR).
Note that these probabilities can capture the effects of fading on
the wireless channel by setting them equal to the probability that a
fading signal, represented by a random variable, exceeds a certain
signal to interference plus noise (SINR) threshold. The collision
channel model used in a number of previous works is given by
$q_{n|n}^{(m)} = 1$, $q_{n|1,2}^{(m)}=0$.

\section{Shannon capacity region}

We first analyze the system under consideration in an
information-theoretic framework to determine the Shannon capacity
region. Thus we assume in this section that the sources are always
backlogged. Coding is performed on the data at the source nodes, and
we view each transmitted packet as a codeword symbol, where a
collection of $N$ codeword symbols constitutes a codeword. In order
to allow for random access of the channel, we assume that while
codeword symbols are synchronized (equivalently, time is slotted),
the codewords for the two sources do not necessarily begin and end
in the same time slots. We analyze this system first for a general
discrete memoryless channel (DMC) and then apply our random access
channel model to obtain the capacity region.

\subsection{Discrete memoryless channel}

The discrete memoryless channel we consider consists of discrete
alphabets $\mathcal X_1, \mathcal X_2, \mathcal Y_1,$ and  $\mathcal
Y_2$ and transition probability matrix $W(y_1,y_2|x_1,x_2)$. The
channel can be decomposed into two multiple access channels, each
corresponding to a destination node and defined as follows.
\begin{eqnarray}
W_1(y_1 | x_1, x_2) & = & \sum_{y_2 \in {\cal Y}_2} W(y_1, y_2| x_1, x_2) \\
W_2(y_2 | x_1, x_2) & = & \sum_{y_1 \in {\cal Y}_1} W(y_1, y_2| x_1,
x_2)
\end{eqnarray}
We assume that there is no feedback available on the channel, which
differs from the assumption made in the model for the stable
throughput problem presented in Section
\ref{section:StableThroughputRegion}.

Source node $n$ generates a sequence of messages
$J_n^1,J_n^2,\hdots$ where the $t^{th}$ message $J_n^t$ takes values
from the message set $\{1,2,\hdots, 2^{NR_n}\}$. The messages are
chosen uniformly from $\{1,2,\hdots, 2^{NR_n}\}$ and independently
over $n$. The encoding function $f_n$ at source $n$ is given by the
mapping
\begin{equation}
f_n: \{1,2,\hdots, 2^{NR_n}\} \to \mathcal X_n^N, \quad n=1,2.
\end{equation}
The encoder output consists of a sequence of codewords $X_n(J_n^t)$,
$t \geq 1$. The system is asynchronous in the following sense. Each
source and each destination maintain a clock. Let $S_{nm}$ denote
the amount of time that the clock at source $n$ is running ahead of
the clock at destination $m$. The $S_{nm}$ are assumed to be
integers, meaning that time is discrete and transmissions are
symbol-synchronous. The time at each clock can be divided into {\it
periods} of length $N$ corresponding to the length of a codeword.
Let $D_{nm}$ denote the offset between the start of periods at
source $n$ and destination $m$, where $0 \leq D_{nm} \leq N-1$. We
assume that $D_{nm}$ are uniform over $[0,1,\hdots,N-1]$ for all
$N$. The codeword $X_n(J_n^1)$ is sent at time $0$ on the clock at
source $n$.

A sequence of channel outputs are observed at each destination,
where the outputs at destination $m$ each take values from the
alphabet $\mathcal Y_m$. The decoder operates on a sequence of
$N(T+1)$ channel outputs to form an estimate of a sequence of $T+1$
messages. A decoder is defined as follows.
\begin{equation}
\phi_{nm}: \mathcal Y_m^{N(T+1)} \to \{1,2,\hdots,2^{NR_n} \}^{T+1},
\quad n,m=1,2
\end{equation}
where $\{1,2,\hdots,2^{NR_n} \}^{T+1}$ denotes the $(T+1)$-fold
Cartesian product of $\{1,2,\hdots,2^{NR_n}\}$. Since the decoder
must synchronize on a particular source $n$, the decoding function
is defined separately for each source. The output of the decoder is
a sequence of message estimates $\hat{J}_{nm}^1,
\hat{J}_{nm}^2,\hdots, \hat{J}_{nm}^{T+1}$, where $\hat{J}_{nm}^t$
denotes the estimate at destination $m$ of the $t^{th}$ message sent
by source $n$. The error criterion we consider is the average
probability of error $P_e^t$ defined as
\begin{equation}
P_e^t = \mbox{Pr} \left\{ \bigcup_{m} \bigcup_{n} \{ J_n^t \neq
\hat{J}_{nm}^t \} \right\}.
\end{equation}
The rate pair $(R_1, R_2)$ is {\it achievable} if there exists
encoding functions $(f_1,f_2)$ and decoding functions
$(\phi_{11},\phi_{12}, \phi_{21}, \phi_{22})$ such that $P_e^t
\rightarrow 0$ for all $t$ as $N \rightarrow \infty$. The capacity
region is the set of all achievable rate pairs

The model we consider here is a compound version of the {\it totally
asynchronous multiple access channel} treated in \cite{Poltyrev83}
and \cite{HuiHumblet85}. As shown in those works, the asynchrony in
the system results in the lack of a convex hull operation, and this
holds as well in our compound version of the problem. The capacity
region is presented below and the proof of the theorem is described
in Appendix \ref{app:proofcapreg}.

\begin{theorem}The capacity region of the asynchronous compound multiple
access channel is the closure of all rate points $(R_1,R_2)$ that
lie in the region
\begin{equation*}
\setlength{\nulldelimiterspace}{0pt}
\begin{IEEEeqnarraybox}[\relax][c]{c} \bigcap_{m}
\left\{(R_1,R_2):\begin{IEEEeqnarraybox}[\relax][c]{r'l'l}
R_1 & \!\! < \!\!& I(X_1;Y_m | X_2) \\
R_2 & \!\! < \!\!& I(X_2;Y_m | X_1) \\
R_1+R_2 & \!\! < \!\!& I(X_1, X_2;Y_m) \end{IEEEeqnarraybox}
\right\} \end{IEEEeqnarraybox}
\end{equation*}
for some product distribution $P(x_1)P(x_2)W$. \label{thrm:capreg}
\end{theorem}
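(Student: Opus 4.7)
The plan is to combine the compound multiple access argument of Ahlswede with the totally asynchronous MAC capacity arguments of Poltyrev and Hui--Humblet cited in the paper. Each destination $m$ sees a multiple access channel with transition law $W_m$, and since each destination must decode the messages of both sources reliably, every achievable rate pair must lie in the asynchronous MAC capacity region for $W_1$ and for $W_2$ simultaneously, under a single common input distribution. The assumed independence of the two source streams $\{J_1^t\}$ and $\{J_2^t\}$ forces that input distribution to have product form $P(x_1)P(x_2)$.

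For achievability, I would run a single random coding argument driven by a fixed product distribution $P(x_1)P(x_2)$: at each source $n$ generate $2^{NR_n}$ length-$N$ codewords drawn i.i.d.\ from $\prod_{i=1}^N P(x_n)$. At each destination $m$, use a joint typicality decoder operating on a window of $N(T+1)$ channel outputs that decodes $T+1$ consecutive messages of each source, exploiting the known offsets $D_{1m}, D_{2m}$. The three standard MAC error events at destination $m$ are controlled by $R_1 < I(X_1;Y_m|X_2)$, $R_2 < I(X_2;Y_m|X_1)$, and $R_1+R_2 < I(X_1,X_2;Y_m)$, all evaluated under $P(x_1)P(x_2)W$. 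Because each $D_{nm}$ is uniform on $[0,N-1]$, averaging over the offsets permits the Poltyrev/Hui--Humblet argument that eliminates the need for an auxiliary time-sharing random variable; a union bound over $m=1,2$ then gives the claimed intersection region.

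For the converse, I would apply Fano's inequality at each destination separately. Since destination $m$ reliably recovers $(J_1^t, J_2^t)$, the pair $(R_1,R_2)$ must lie in the capacity region of the asynchronous MAC with transition law $W_m$, and the single-letter characterization with a product input and \emph{without} a convex-hull operation is exactly the Poltyrev/Hui--Humblet theorem applied to $W_m$. Intersecting over $m=1,2$ yields the region in the statement, and the closure is obtained in the usual way by sending $N\to\infty$.

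The main difficulty is ensuring that one common product distribution $P(x_1)P(x_2)$ serves both destinations simultaneously in achievability, and that the asynchrony-based removal of the convex hull continues to hold when two receivers are considered jointly. The first issue is handled automatically by random coding: the same codebooks are used at both receivers, so the rate pair is achievable whenever it lies in both pentagons. The second issue reduces to verifying that the Poltyrev/Hui--Humblet single-letterization depends only on the marginal channel $W_m$ and on the shared input distribution, so it extends independently to each destination; intersecting across $m$ preserves the absence of a convex hull and yields the stated region.
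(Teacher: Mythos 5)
Your overall route is the same as the paper's: view each destination $m$ as an asynchronous MAC $W_m$, invoke the Poltyrev/Hui--Humblet achievability and converse for each, and intersect the two pentagon regions under one common product input distribution, with a union bound over $m$ tying the per-receiver errors to $P_e^t$ (the paper packages that last step as Lemma~\ref{lemma:errorcond}).

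However, your achievability step has a genuine gap: you decode ``exploiting the known offsets $D_{1m},D_{2m}$,'' but in the model the offsets are random and the decoders $\phi_{nm}:\mathcal Y_m^{N(T+1)}\to\{1,\dots,2^{NR_n}\}^{T+1}$ are fixed maps that do not observe them; synchronizing on the transmitter is part of the decoder's job. The substance of the Hui--Humblet argument that the paper imports is precisely this synchronization: two codewords $x_n(1),x_n(2)$ are reserved as alternating preambles inserted after every $T$ messages, which lets the receiver locate codeword boundaries with vanishing probability of synchronization error, and no message is repeated within a window of $T+1$, so the overlapping input blocks in the decoding window are independent, which the typicality analysis needs. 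Without these devices (or some other argument that the offsets can be learned with negligible rate loss), your typicality decoder is not defined for the stated model. A secondary point: your claim that ``averaging over the offsets eliminates the need for an auxiliary time-sharing random variable'' misplaces where asynchrony matters -- the absence of the convex hull is a converse phenomenon (total asynchrony prevents the encoders from coordinating time sharing, as shown in \cite{Poltyrev83,HuiHumblet85}), not something obtained by averaging over $D_{nm}$ in the direct part. Your converse is otherwise at the same level of detail as the paper's: Fano plus the single-letterization of \cite{HuiHumblet85,Poltyrev83} applied to each $W_m$, with the observation (which you correctly flag) that the same codebooks induce one common product distribution at both receivers, so the intersection is over pentagons evaluated at a shared $P(x_1)P(x_2)$.
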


\subsection{Random access system}

We now turn our attention to a random access system and apply
Theorem \ref{thrm:capreg} to determine the capacity region of the
system. Each codeword symbol corresponds to a packet transmitted
over the channel shown in Figure \ref{fig:general_2sour2dest}. We
define the common input alphabet as ${\cal X}=\{0, 1,
2,\hdots,2^u\}$, where $X_n \in {\cal X}$, for $n=1,2$. A channel
input $X_n$ can be either a packet of length $u$ bits (an
information-bearing symbol) or an idle symbol. The $0$ symbol is the
idle symbol and we let Pr$\{ X_n = 0\}=1-p_n$ according to the
random access transmission probability. We assume a uniform
distribution on the information-bearing codeword symbols,
$\mbox{Pr}\{ X_n = x\}=p_n/2^u$, $x=1,2,\hdots 2^u$, meaning that a
packet is equally likely to be any sequence of $u$ bits. The channel
output at receiver $m$ is given by $Y_m=(Y_{1m},Y_{2m}) \in {\cal
X}' \times {\cal X}'$ where $Y_{nm}$ denotes the packet from source
$n$ and ${\cal X}'={\cal X} \cup \Delta$. The $\Delta$ symbol
denotes a packet in error.

The introduction of the idle symbol $0$ results in additional
protocol or timing information being transmitted over the channel.
The information content of this idle symbol is $h_b(p_n), n=1,2$,
where $h_b$ denotes the binary entropy function. The term $h_b(p_n)$
appears in the proof provided below and represents the protocol
information that is studied by Gallager in \cite{Gallager76}.
Because we would like our capacity result to represent the rate of
reliable communication of data packets, we will aim to exclude this
timing information. We do so by considering capacity in packets/slot
in the limit as $u \rightarrow \infty$, meaning that the data
packets grow large and the fraction of timing information
transmitted approaches $0$. The timing information is excluded in
previous work on random access capacity. In \cite{MasseyMathys},
prior to the start of transmission, a ``protocol sequence''
indicating the occurrence of idle slots is generated at the source
and communicated to the receiver, effectively eliminating timing
information. In \cite{Tinguely}, the capacity for $u \rightarrow
\infty$ is presented. The capacity of the random access multicast
system is given in the following Corollary to Theorem
\ref{thrm:capreg}.

\begin{cor}
The capacity region of the random access system with two sources and
two destinations is the closure of $(R_1,R_2)$ for which
\begin{eqnarray*}
R_1 & < &  \min_{m=1,2} p_1(1-p_2)q_{1|1}^{(m)}+p_1p_2q_{1|1,2}^{(m)} \\
R_2 & < & \min_{m=1,2}(1-p_1)p_2q_{2|2}^{(m)}+p_1p_2q_{2|1,2}^{(m)}
\end{eqnarray*}
for some $(p_1,p_2) \in [0,1]^2$.
\label{cor:capacity}
\end{cor}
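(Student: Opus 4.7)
The plan is to specialize Theorem~\ref{thrm:capreg} to the random access channel model using the specified product input distribution $P(X_n = 0) = 1-p_n$, $P(X_n = x) = p_n/2^u$ for $x \in \{1, \ldots, 2^u\}$, evaluate each mutual information in the region description, divide by $u$ to pass from bits per slot to packets per slot, and take the limit $u \to \infty$ to eliminate the timing information. Intersecting over $m \in \{1,2\}$ then produces the stated min on each coordinate.

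For the bound on $R_1$, I would expand $I(X_1; Y_m | X_2) = H(X_1) - H(X_1 | Y_m, X_2)$. Here $H(X_1) = h_b(p_1) + p_1 u$. To compute $H(X_1 | Y_m, X_2)$, condition on $X_2$: when $X_2 = 0$ the map $X_1 \to Y_{1m}$ is an erasure channel with erasure probability $1 - q_{1|1}^{(m)}$, and when $X_2 \neq 0$ it has erasure probability $1 - q_{1|1,2}^{(m)}$. In both cases the only residual uncertainty in $X_1$ given $(X_2, Y_m)$ arises on the event $\{Y_{1m} = \Delta\}$, where $X_1$ remains uniform over $\{1, \ldots, 2^u\}$ and contributes $u$ bits. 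Collecting terms yields
\begin{equation*}
\frac{I(X_1; Y_m | X_2)}{u} = \frac{h_b(p_1)}{u} + p_1 (1-p_2) q_{1|1}^{(m)} + p_1 p_2 q_{1|1,2}^{(m)},
\end{equation*}
and the protocol contribution $h_b(p_1)/u$ vanishes as $u \to \infty$, giving the stated $R_1$ bound; the $R_2$ bound is symmetric.

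The remaining task, and the main obstacle, is to verify that the sum-rate constraint $R_1 + R_2 < I(X_1, X_2; Y_m)$ from Theorem~\ref{thrm:capreg} is redundant in the limit. Writing $\bar q_n^{(m)} = (1-p_{3-n}) q_{n|n}^{(m)} + p_{3-n} q_{n|1,2}^{(m)}$, I would establish $H(Y_m)/u \to p_1 \bar q_1^{(m)} + p_2 \bar q_2^{(m)}$, so that, combined with $H(Y_m | X_1, X_2) = O(1)$ (only the erasure pattern is random once the inputs are fixed), one obtains $I(X_1, X_2; Y_m)/u$ tending to exactly the sum of the two individual mutual-information limits, making the sum constraint redundant. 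The delicate point is that $Y_{1m}$ and $Y_{2m}$ are not marginally independent, since each reception probability depends on whether the other source transmits; however, this interference-coupling contributes only to an $O(1)$ correction, whereas the $u$-scaling entropy comes entirely from the uniform packet contents, which are independent across the two coordinates. Decomposing $H(Y_m)$ into contributions from idle, erased, and packet-content outcomes and verifying that the coupling terms remain $O(1)$ completes the argument, and the corollary then follows from Theorem~\ref{thrm:capreg} after intersection over $m$.
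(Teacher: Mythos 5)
Your proposal is correct, and for the individual-rate constraints it is essentially the paper's own argument: plug the prescribed input distribution into Theorem~\ref{thrm:capreg}, compute $I(X_1;Y_m|X_2)$ by conditioning on $X_2=0$ versus $X_2\neq 0$, obtain $h_b(p_1)+u\,[p_1(1-p_2)q_{1|1}^{(m)}+p_1p_2q_{1|1,2}^{(m)}]$, divide by $u$, and let $u\to\infty$ so the protocol term $h_b(p_1)/u$ vanishes. Where you genuinely differ is the sum-rate redundancy. The paper works on the input side, expanding $I(X_1,X_2;Y_m)=H(X_1)+H(X_2)-H(X_1|Y_{1m},X_2)-H(X_2|Y_{1m},Y_{2m})$ and computing each term explicitly, which gives the exact finite-$u$ identity $I(X_1,X_2;Y_m)=I(X_1;Y_m|X_2)+I(X_2;Y_m|X_1)$ and hence superfluity of the sum bound. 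You instead use the output-side form $H(Y_m)-H(Y_m|X_1,X_2)$, observing that $H(Y_m|X_1,X_2)=O(1)$ (given the inputs only the erasure pattern is random, at most two bits) and that the $u$-scaled part of $H(Y_m)$ is carried entirely by the uniform packet contents, which are independent across the two coordinates given the idle/erased/received status pair, so the interference-induced correlation between $Y_{1m}$ and $Y_{2m}$ lives only in the $O(1)$ status entropy; this yields $I(X_1,X_2;Y_m)/u \to p_1\bar q_1^{(m)}+p_2\bar q_2^{(m)}$, matching the sum of the individual limits. Both routes reach the same conclusion; yours is somewhat cleaner in that it avoids the conditional-entropy bookkeeping of terms like $H(X_2|Y_{1m},Y_{2m})$ and makes transparent why the output correlation is harmless, while the paper's term-by-term computation has the small advantage of producing the exact pre-limit expression rather than an asymptotic one. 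The $O(1)$ verification you defer is indeed straightforward (the status pair takes at most nine values), so the sketch closes without difficulty.
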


\begin{proof}
The result follows by applying the assumptions about the input
distribution and channel reception probabilities to the expressions
given in Theorem \ref{thrm:capreg}. We first solve for
$I(X_1;Y_m|X_2)$ by conditioning on $X_2$ to obtain
\begin{equation}
I(X_1;Y_m|X_2) = (1-p_2)I(X_1;Y_m|X_2=0) + p_2I(X_1;Y_m|X_2\neq0).
\end{equation}
An expression for $I(X_1;Y_m|X_2=0)$ can be found from the following
sequence of equalities.
\begin{eqnarray}
I(X_1;Y_m|X_2=0) & = & H(X_1|X_2 = 0) - H(X_1|Y_m,X_2 = 0) \nonumber \\
 & \stackrel{(a)}= & H(X_1) -  \mbox{Pr}(Y_{1m}=\Delta|X_2 = 0) \log_2 2^u \nonumber \\
 & = & -(1-p_1)\log_2 (1-p_1) - p_1 \log_2 (p_1/2^u) - p_1
 (1-q_{1|1}^{(m)}) u \nonumber \\
 & = & h_b(p_1) + p_1 \log_2 2^u - p_1 u + p_1 q_{1|1}^{(m)}u \nonumber \\
 & = & h_b(p_1) + p_1 q_{1|1}^{(m)} u \label{eqn:HX1GivenX2_1}
\end{eqnarray}
where $(a)$ holds since $X_2$ is independent of $X_1$ and since
$H(X_1|Y_{1m}\neq \Delta,X_2 = 0)=0$. For $(X_1;Y_m|X_2 \neq 0)$ we
have
\begin{eqnarray}
I(X_1;Y_m|X_2 \neq 0) & = & H(X_1|X_2 \neq 0) - H(X_1|Y_m,X_2 \neq 0) \nonumber \\
 & = & H(X_1) - \mbox{Pr}(Y_{1m}=\Delta|X_2 \neq 0) \log_2 2^u  \nonumber \\
 & = & -(1-p_1)\log_2 (1-p_1) - p_1 \log_2 (p_1/2^u) - p_1
 (1-q_{1|1,2}^{(m)}) u \nonumber \\
 & = & h_b(p_1) + p_1 \log_2 2^u - p_1 u  + p_1 q_{1|1,2}^{(m)}u \nonumber \\
 & = & h_b(p_1) + p_1 q_{1|1,2}^{(m)} u. \label{eqn:HX1GivenX2_2}
\end{eqnarray}
Combining expressions (\ref{eqn:HX1GivenX2_1}) and
(\ref{eqn:HX1GivenX2_2}) results in
\begin{equation}
I(X_1;Y_m|X_2) = h_b(p_1) + u p_1 (1-p_2)q_{1|1}^{(m)} + u p_1p_2
q_{1|1,2}^{(m)} \mbox{ bits/transmission}.
\end{equation}
Since one packet corresponds to $u$ bits, we divide by $u$ to obtain
a result in units of packets per slot. We then let $u \rightarrow
\infty$ to obtain
\begin{equation}
I(X_1;Y_m|X_2) =   p_1 (1-p_2)q_{1|1}^{(m)} + p_1p_2 q_{1|1,2}^{(m)}
\mbox{ packets/slot}.
\end{equation}
By following a similar approach, we can show that
$I(X_2;Y_m|X_1)$ is given as follows.
\begin{equation}
I(X_2;Y_m|X_1) = (1-p_1)p_2q_{2|2}^{(m)} + p_1p_2 q_{2|1,2}^{(m)}
\mbox{ packets/slot}
\end{equation}
The bound on the sum rate can be found by breaking up
$I(X_1,X_2;Y_m)$ into four terms.
\begin{eqnarray}
I(X_1,X_2;Y_m) & = & H(X_1,X_2) - H(X_1,X_2|Y_m) \nonumber \\
 & = & H(X_1|X_2) + H(X_2) - H(X_1|Y_m,X_2) - H(X_2|Y_m) \nonumber \\
 & = & H(X_1) + H(X_2) - H(X_1|Y_{1m},X_2) - H(X_2|Y_{1m},Y_{2m})
 \label{eqn:IX1X2;Y4terms}
\end{eqnarray}
For $n=1,2$, the terms $H(X_n)$ can be expressed as
\begin{eqnarray}
H(X_n) & = & -(1-p_n)\log_2(1-p_n) - p_n\log_2(p_n/2^u) \nonumber \\
 & = & h_b(p_n) + p_n\log_2 2^u \nonumber \\
 & = & h_b(p_n) + p_n u. \label{eqn:HXn}
\end{eqnarray}
The last two terms in (\ref{eqn:IX1X2;Y4terms}) can be found in the
following manner.
\begin{eqnarray}
H(X_1|Y_{1m},X_2) & = & H(X_1|Y_{1m}=\Delta,X_2) \mbox{Pr}(Y_{1m}=\Delta|X_2) \nonumber \\
 & = & (1-p_2) H(X_1|Y_{1m}=\Delta,X_2=0)\mbox{Pr}(Y_{1m}=\Delta|X_2 =0) \nonumber \\
 & & + p_2 H(X_1|Y_{1m}=\Delta,X_2 \neq
 0) \mbox{Pr}(Y_{1m}=\Delta|X_2 \neq 0) \nonumber \\
 & = & (1-p_2) \log_2 2^u \mbox{Pr}(Y_{1m}=\Delta|X_2=0) + p_2
 \log_2
 2^u \mbox{Pr}(Y_{1m}=\Delta|X_2 \neq 0) \nonumber \\
 & = & u (1-p_2) p_1 (1-q_{1|1}^{(m)}) + u p_2 p_1 (1-q_{1|1,2}^{(m)})
 \nonumber \\
 & = & u p_1 - u p_1 (1-p_2)q_{1|1}^{(m)} - u p_1 p_2 q_{1|1,2}^{(m)}
 \label{eqn:HX1GivenY1X2}
\end{eqnarray}
\begin{eqnarray}
H(X_2|Y_{1m},Y_{2m}) & = & (1-p_1) H(X_2|Y_{1m}=0,Y_{2m}) + p_1
H(X_2|Y_{1m} \neq 0,Y_{2m}) \nonumber \\
 & = & (1-p_1) H(X_2|Y_{1m}=0,Y_{2m}=\Delta)\mbox{Pr}(Y_{2m}=\Delta|Y_{1m}=0) \nonumber \\ & & + p_1
H(X_2|Y_{1m} \neq 0,Y_{2m}=\Delta) \mbox{Pr}(Y_{2m}=\Delta|Y_{1m} \neq 0) \nonumber \\
 & = & (1-p_1) \log_2 2^u p_2 (1-q_{2|2}^{(m)}) + p_1 \log_2 2^u p_2
 (1-q_{2|1,2}^{(m)}) \nonumber \\
 & = & u p_2 - u(1-p_1)p_2q_{2|2}^{(m)} - u p_1 p_2 q_{2|1,2}^{(m)}
 \label{eqn:HX2GivenY}
\end{eqnarray}
By substituting (\ref{eqn:HXn}), (\ref{eqn:HX1GivenY1X2}), and
(\ref{eqn:HX2GivenY}) in (\ref{eqn:IX1X2;Y4terms}), dividing by $u$
and taking $u \rightarrow \infty$, we obtain
\begin{equation}
I(X_1,X_2;Y_m) = p_1 (1-p_2)q_{1|1}^{(m)} +  p_1p_2 q_{1|1,2}^{(m)}
+ (1-p_1)p_2q_{2|2}^{(m)} +  p_1p_2 q_{2|1,2}^{(m)} \mbox{
packets/slot} .
\end{equation}
Since $I(X_1,X_2;Y_m) = I(X_1;Y_m|X_2) + I(X_2;Y_m|X_1)$, the bound
on the sum rate is superfluous in terms of describing the capacity
region. The result follows.
\end{proof}

\section{Stable throughput region}
\label{section:StableThroughputRegion}

In this section we treat the system shown in Fig.
\ref{fig:general_2sour2dest} as a network of queues and state the
stable throughput region of the system, which is a generalization of
previous results on the stable throughput for a system with a single
destination node. The model we consider is as follows. We no longer
assume that the sources are always backlogged; instead a random
Bernoulli process with rate $\lambda_n$ packets/slot, $n=1,2$ models
the arrival of packets at each source. Packets that are not
immediately transmitted are stored in an infinite-capacity buffer
maintained at each source. Transmissions occur according to the
random access protocol with source $n$ transmitting with probability
$p_n$ when it is backlogged. If a source is empty in a given slot,
it does not access the channel. Each source-destination pair is
assumed to maintain an orthogonal feedback channel so that
instantaneous and error-free acknowledgements can be sent from the
destinations to the sources. We find the stable throughput region
for two different transmission schemes: retransmissions and random
linear coding.

The queue at each source is described by its arrival process and
departure process. As $\lambda_n$ represents the arrival rate, we
let $\mu_n$ denote the departure or service rate. In general, a
queue is said to be stable if departures occur more frequently than
arrivals, i.e., $\lambda_n < \mu_n$. This statement is made precise
by Loynes' result \cite{Loynes62}, which states that if the arrival
and departure processes are non-negative, finite, and strictly
stationary, then $\lambda_n < \mu_n$ is a necessary and sufficient
condition for stability. Stability of the queue is equivalent to
ergodicity of the Markov chain representing the queue length. The
stable throughput region for a given transmission policy is defined
as the set of all $(\lambda_1,\lambda_2)$ for which there exists
transmission probabilities $(p_1,p_2)$ such that both queues remain
stable.

The difficulty in finding the stable throughput region for our
system (and for any buffered random access system) arises from the
interaction of the queues. In particular, the service rate $\mu_n$
of source $n$ will vary according to whether the {\it other} source
is empty or backlogged and can create interference on the channel.
To overcome this difficulty, the technique provided in
\cite{RaoAE88} of introducing a dominant system can be used to
decouple the sources. In a dominant system, one of the sources
behaves as if it is always backlogged by transmitting ``dummy''
packets when it empties. The queue length in a dominant system
stochastically dominates (i.e., is never smaller than) the queue
length in the system of interest, meaning that stability in a
dominant system implies stability in the original system. Since one
source always behaves as if it is backlogged, the service rate in
the dominant system can be easily found. Let $\mu_{nb}$ denote the
service rate at source $n$ when the other source is backlogged and
$\mu_{ne}$ the service rate when the other source is empty. Using
the dominant systems approach, the stable throughput region for a
system with two sources can be found exactly. This region is stated
in the following theorem, which is a generalization of the result in
\cite{RaoAE88}. Note that in the stable throughput region presented
below, the service rates $\mu_{nb}$ and $\mu_{ne}$ are functions of
$p_n$, $n=1,2$ (although not explicitly shown in these expressions).

\begin{theorem} \cite{RaoAE88} \label{thrm:2x2stabreg} For a network with two sources and two destinations,
the stable throughput region is given by the closure of ${\cal
L}(p_1,p_2)$ where
\begin{equation}
{\cal L}(p_1,p_2) = \bigcup_{i=1,2} {\cal L}_i (p_1,p_2)
\end{equation}
and
\begin{equation*}
\setlength{\nulldelimiterspace}{0pt} {\cal L}_1(p_1,p_2) =
\left\{(\lambda_1,\lambda_2):\begin{IEEEeqnarraybox}[\relax][c]{r'l'l}
\lambda_1 & \!\! < \!\!& \frac{\lambda_2}{\mu_{2b}} \mu_{1b} +
\left( 1 -\frac{\lambda_2}{\mu_{2b}}\right) \mu_{1e} \\
\lambda_2 & \!\! < \!\!& \mu_{2b} \end{IEEEeqnarraybox} \right\}
\end{equation*}
\begin{equation*}
\setlength{\nulldelimiterspace}{0pt} {\cal L}_2(p_1,p_2) =
\left\{(\lambda_1,\lambda_2):\begin{IEEEeqnarraybox}[\relax][c]{r'l'l}
\lambda_1 & \!\! < \!\!& \mu_{1b} \\
\lambda_2 & \!\! < \!\!& \frac{\lambda_1}{\mu_{1b}} \mu_{2b} +
\left( 1 - \frac{\lambda_1}{\mu_{1b}} \right) \mu_{2e}
\end{IEEEeqnarraybox} \right\}
\end{equation*}
for some $(p_1,p_2) \in [0,1]^2$.
\hspace{0.1cm} 
\end{theorem}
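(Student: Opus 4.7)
The plan is to invoke Loynes' criterion together with the dominant-systems construction of Rao and Ephremides, generalized to account for the per-destination reception probabilities. First I would introduce two dominant systems indexed by $i \in \{1,2\}$: in dominant system $i$, source $i$ transmits ``dummy'' packets with probability $p_i$ whenever its real queue is empty, while the other source operates exactly as in the original system. A pathwise coupling on a common arrival sequence shows that the queue length at each source in dominant system $i$ is never smaller than the queue length in the original system, so stability of dominant system $i$ implies stability of the original. This yields the ``$\supseteq$'' direction once the two $\mathcal{L}_i$ are identified.

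Next I would compute the service rates inside dominant system 1 (system 2 being symmetric). Because source $1$ is always ``backlogged'' in the transmission sense, source $2$'s service rate is $\mu_{2b}$ in every slot, so by Loynes' theorem its queue is stable iff $\lambda_2<\mu_{2b}$. Granted this, the stationary fraction of slots in which source $2$ is non-empty equals $\lambda_2/\mu_{2b}$ by ergodicity of the stable discrete-time queue. Conditioning on whether source $2$'s queue is empty and using that $\mu_{1b}$ and $\mu_{1e}$ are exactly the conditional service rates (which, thanks to the per-destination $q_{n|n}^{(m)}$ and $q_{n|1,2}^{(m)}$, just take the form of a minimum over $m$ after plugging in reception probabilities), the time-averaged service rate of source $1$ is $(\lambda_2/\mu_{2b})\mu_{1b} + (1-\lambda_2/\mu_{2b})\mu_{1e}$. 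A second application of Loynes' theorem then gives the defining inequalities of $\mathcal{L}_1(p_1,p_2)$, and the symmetric construction gives $\mathcal{L}_2(p_1,p_2)$. Taking the union over $i$ and the closure produces the claimed region.

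The remaining direction is the converse, and this is where I expect the main obstacle. One must show that every $(\lambda_1,\lambda_2)$ strictly outside $\mathcal{L}(p_1,p_2)$ leads to an unstable original system. The standard tool is the Rao--Ephremides indistinguishability argument: as long as source $i$'s queue in dominant system $i$ is non-empty, the sample paths of the dominant and original systems coincide, since the two only differ in slots when a dummy is generated. Thus if the dominant queue grows without bound, so must the original queue, giving instability of the original whenever a point lies outside $\mathcal{L}_i$ for both $i$. The delicate step is justifying this for our multicast setting: one must verify that the min-over-$m$ structure of $\mu_{nb}$ and $\mu_{ne}$ does not destroy the coupling, that the Markov chain tracking both queue lengths is indeed irreducible and aperiodic so that Loynes' stationarity hypotheses apply, and that the long-run occupancy $\lambda_2/\mu_{2b}$ used above is attained even when the system is run jointly with the queue at source $1$. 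Once these checks are in place, combining stochastic dominance (for achievability) with indistinguishability (for tightness) closes the proof.
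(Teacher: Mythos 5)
Your proposal follows exactly the route the paper relies on: the paper gives no proof of this theorem, instead citing Rao--Ephremides \cite{RaoAE88} and sketching the same dominant-system construction (dummy packets when a queue empties, stochastic dominance for sufficiency, indistinguishability of sample paths plus Loynes' criterion for tightness) that you spell out, with $\mu_{nb}$ and $\mu_{ne}$ entering only as abstract per-state service rates. One small slip worth noting: those service rates do not have a min-over-$m$ form (for the retransmission policy they arise from the expected maximum of per-destination service times, cf.~(\ref{eqn:mu_nb})); the min over $m$ appears only in the Shannon capacity bound, but since your argument uses $\mu_{nb},\mu_{ne}$ generically this does not affect it.
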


In addition to the stable throughput region, we will be interested
in the throughput region of the random access system shown in Fig.
\ref{fig:general_2sour2dest}. The throughput region is the closure
of the service rates $\mu_{nb}$ for all $p_n$, $n=1,2$, where both
sources are assumed to be backlogged. In finding the throughput
region, there is no interaction between the sources, and the problem
is simpler than finding the stable throughput problem. Previous work
on buffered random access systems has shown that in all cases in
which the stable throughput region has been found, it is known to
coincide with the throughput region, suggesting that the empty state
of the buffer is insignificant in determining stability. The
relation between stable throughput and throughput is explored in
\cite{RockeyAE_ITTrans06}. Additionally, in
\cite{ShraderEphremidesITTrans07StabThrpt}, it is proved that for a
random access system with two sources and two destinations, the
stable throughput region coincides with the throughput region. This
result is restated below.

\begin{theorem} \cite{ShraderEphremidesITTrans07StabThrpt}
The stable throughput region of the random access system with two
sources and two destinations is equivalent to the throughput region,
which is given by the closure of $(\lambda_1,\lambda_2)$ for which
\begin{equation*}
\lambda_1  <  \mu_{1b}, \quad \lambda_2  <  \mu_{2b}
\end{equation*}
for some $(p_1,p_2) \in [0,1]^2$. \label{theorem:throughput}
\end{theorem}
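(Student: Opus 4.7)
The plan is to prove both inclusions between the throughput region and the stable throughput region ${\cal L}(p_1,p_2)$ of Theorem~\ref{thrm:2x2stabreg}. The throughput region is trivially contained in the stable throughput region: since $\mu_{1e}\geq\mu_{1b}$ and $\mu_{2e}\geq\mu_{2b}$ (removing interference from the opposing source cannot decrease the service rate), any $(\lambda_1,\lambda_2)$ with $\lambda_n<\mu_{nb}(p_1,p_2)$ for $n=1,2$ satisfies
\[
\tfrac{\lambda_2}{\mu_{2b}}\mu_{1b}+\bigl(1-\tfrac{\lambda_2}{\mu_{2b}}\bigr)\mu_{1e}\;\geq\;\mu_{1b}\;>\;\lambda_1,
\]
so the point lies in ${\cal L}_1(p_1,p_2)$, and symmetrically in ${\cal L}_2$ after swapping indices.

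For the reverse inclusion I would argue that the ``extra'' portion of ${\cal L}(p_1,p_2)$ beyond the rectangle $[0,\mu_{1b}]\times[0,\mu_{2b}]$ is already covered by the throughput region once the transmission probabilities are re-chosen. The only nontrivial part of $\partial {\cal L}_1(p_1,p_2)$ is the segment $\lambda_1=\alpha\mu_{1b}+(1-\alpha)\mu_{1e}$, $\lambda_2=\alpha\mu_{2b}$, running from $(\mu_{1e},0)$ at $\alpha=0$ to $(\mu_{1b},\mu_{2b})$ at $\alpha=1$. The intuition is that when source~2's arrival rate equals $\alpha\mu_{2b}$ it is busy a fraction $\alpha$ of the time, so source~1 effectively sees a thinned interferer; replacing $p_2$ by a suitable $p_2'<p_2$ should produce the same effective backlogged service rate. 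I would formalize this by fixing $p_1$, varying $p_2'\in[0,p_2]$ continuously, and invoking the intermediate value theorem on the continuous map $p_2'\mapsto(\mu_{1b}(p_1,p_2'),\mu_{2b}(p_1,p_2'))$, whose endpoints are exactly $(\mu_{1e},0)$ and $(\mu_{1b},\mu_{2b})$; every point on the target segment (and everything weakly below it) then falls into the throughput region for some $(p_1,p_2')$. The ${\cal L}_2$ case is handled symmetrically, after which closure and union over $(p_1,p_2)$ yield the equivalence.

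The main obstacle is verifying that the parametric Pareto curve $\{(\mu_{1b}(p_1,p_2'),\mu_{2b}(p_1,p_2')):p_2'\in[0,p_2]\}$ weakly dominates the stable-throughput line segment, rather than dipping below it. For the retransmission multicast policy this requires a careful computation of $\mu_{nb}$ and $\mu_{ne}$---noting that a packet departs only once \emph{both} destinations acknowledge reception, so the per-packet service time is the maximum of two coupled geometric random variables rather than a single geometric---and for the random linear coding policy the analysis depends on the blocklength, so the same path-sweep argument must be carried out a second time with the appropriate service-rate formulas. Once the convex/concave structure of the throughput-region boundary is pinned down in each case, the continuity step and the final closure argument are routine.
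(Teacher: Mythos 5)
You should first note that the paper itself contains no proof of Theorem~\ref{theorem:throughput}; it is imported verbatim from \cite{ShraderEphremidesITTrans07StabThrpt}, so your argument has to stand entirely on its own. Your first inclusion (throughput region $\subseteq$ stable throughput region) is fine, using $\mu_{ne}\ge\mu_{nb}$, which holds under the standing assumption $q_{n|n}^{(m)}>q_{n|1,2}^{(m)}$. The gap is in the reverse inclusion, and it is exactly the step you defer as ``the main obstacle'': the claim that the swept curve $p_2'\mapsto(\mu_{1b}(p_1,p_2'),\mu_{2b}(p_1,p_2'))$ weakly dominates the boundary segment of ${\cal L}_1(p_1,p_2)$ is not a routine verification --- for the retransmission rates (\ref{eqn:mu_nb}) it is false along your fixed-$p_1$ path. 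Since $\phi_2,\sigma_2,\tau_2$ depend only on $p_1$, the rate $\mu_{2b}(p_1,p_2')$ is exactly linear in $p_2'$, so matching $\lambda_2=\alpha\mu_{2b}$ forces $p_2'=\alpha p_2$, and your dominance requirement becomes chord-concavity of $p_2\mapsto\mu_{1b}(p_1,p_2)$ on $[0,p_2]$. That map is convex, not concave, for typical parameters: taking $q_{1|1}^{(1)}=q_{1|1}^{(2)}=0.8$, $q_{1|1,2}^{(1)}=q_{1|1,2}^{(2)}=0.6$, $p_1=1$, formula (\ref{eqn:mu_nb}) gives $\mu_{1b}(1,0)\approx 0.686$, $\mu_{1b}(1,1)\approx 0.467$, but $\mu_{1b}(1,0.5)\approx 0.573$, which is strictly below the chord midpoint $\approx 0.576$. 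So the parametric curve dips below the segment, and the intermediate-value sweep with $p_1$ held fixed cannot cover the portion of ${\cal L}_1(p_1,p_2)$ lying between the curve and the segment.

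This is not something you can repair merely by ``a careful computation'' of $E[\max_m T_n^{(m)}]$: in the symmetric example above $p_1$ is already at its maximum, so there is no slack to recover by re-optimizing the other probability either, which shows that a proof of the hard direction cannot be a generic geometric argument applied to the closed-form rates $\mu_{nb}=1/E[\max_m T_n^{(m)}]$ plugged into Theorem~\ref{thrm:2x2stabreg}. A correct proof must work with the structure of the service process itself --- for instance the Markov chain tracking which destinations have already received the head-of-line packet, which is the machinery the cited reference uses --- or must argue instability directly for arrival vectors outside the throughput region, rather than sweeping the formulas of Theorem~\ref{thrm:2x2stabreg} and invoking continuity. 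In addition, the same boundary-coincidence step would have to be established separately for the random linear coding rates $\widetilde{\mu}_{nb}$, for which no closed form is available in the paper. As it stands, your proposal establishes only the easy containment; the crux of the theorem is precisely the part left open, and the specific path-sweep you outline would fail if carried out.
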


We derive the backlogged and empty service rates $\mu_{nb}$ and
$\mu_{ne}$ for two different transmission schemes. Together with
Theorems \ref{thrm:2x2stabreg} and \ref{theorem:throughput}, this
provides us a complete characterization of the stable throughput
region.

\section{Stable throughput: retransmissions}
\label{section:StabilityRetrans}

In this section we describe the stable throughput region under
assuming that a retransmission protocol is used to ensure reliable
communication. In the retransmission scheme, as long as source $n$
has not received feedback acknowledgements from both destinations
$m=1,2$, it will continue to transmit the packet over the channel
with probability $p_n$ in each slot. As soon as the source has
received acknowledgements from both destinations, it will remove the
packet from its queue and begin transmitting the next packet waiting
in its buffer, if any. Let random variable $T_n$ denote the service
time for source $n$, given by the total number of slots that
transpire before the packet from source $n$ is successfully received
at both destinations. (Note that the service time includes slots
during which the source does not transmit, which happens with
probability $1-p_n$). Since each completed service in the
retransmission scheme results in 1 packet being removed from the
queue, the average service rate is given by $\mu_n = 1/E[T_n]$.

We first find the backlogged service rates $\mu_{nb}$. Let
$T_n^{(m)}$ denote the number of slots needed for successful
reception of a packet from source $n$ at destination $m$, $m=1,2$.
The $T_n^{(m)}$ are geometrically distributed according to the
transmission probabilities $p_n$ and reception probabilities
$q_{n|n}^{(m)}$, $q_{n|1,2}^{(m)}$. We introduce the following
notation. Let $\phi_n$ denote the probability of successful
reception of the packet from source $n$ at destination 1 given that
source $n$ transmits and that both sources are backlogged.
Similarly, $\sigma_n$ denotes the probability of successful
reception at destination 2 given that both sources are backlogged
and that source $n$ transmits. For instance, $\phi_1$ and $\sigma_1$
are given by
\begin{eqnarray}
\phi_1 & = & \overline{p}_2 q_{1|1}^{(1)} + p_2q_{1|1,2}^{(1)} \\
\sigma_1 & = & \overline{p}_2 q_{1|1}^{(2)} + p_2q_{1|1,2}^{(2)}
\end{eqnarray}
where $\overline{p}_n=1-p_n$. When source 2 is backlogged,
$T_1^{(1)}$ is geometrically distributed with parameter $p_1\phi_1$
and $T_1^{(2)}$ is geometrically distributed with parameter
$p_1\sigma_1$. The total service time for source 1 when source 2 is
backlogged will be given by the maximum of the service times to each
destination,
\begin{equation}
T_1 \sim \max_m T_1^{(m)}. \label{eqn:RetransServTime1B}
\end{equation}
Similarly, when source 1 is backlogged, the service time for source
2 is given by
\begin{equation}
T_2 \sim \max_m T_2^{(m)}, \label{eqn:RetransServTime2B}
\end{equation}
where
\begin{equation}
T_2^{(1)} \sim \mbox{geom}\left(p_2 \phi_2 \right), \quad T_2^{(2)}
\sim \mbox{geom}\left( p_2 \sigma_2 \right).
\label{eqn:RetransServTime2Bm}
\end{equation}
The expected maximum value $E[T_n]$ can be readily found and the
backlogged service rates are given by
\begin{equation}
\mu_{nb}=\frac{p_n\phi_n\sigma_n(\phi_n+\sigma_n-\tau_n)}{(\phi_n+\sigma_n)(\phi_n+\sigma_n-\tau_n)-\phi_n\sigma_n}
\label{eqn:mu_nb}
\end{equation}
where $\tau_n$ denotes the probability that a packet sent from $n$
is received at both destinations given that source $n$ transmits,
e.g., $\tau_1 = \overline{p}_2 q_{1|1}^{(1)}q_{1|1}^{(2)} + p_2
q_{1|1,2}^{(1)}q_{1|1,2}^{(2)}$.

The empty service rates $\mu_{ne}$ can be found directly from
$\mu_{nb}$ as
\begin{equation}
\mu_{1e}=\mu_{1b}|_{p_2=0}, \quad \mu_{2e}=\mu_{2b}|_{p_1=0}.
\label{eqn:mu_ne}
\end{equation}
The backlogged and empty service rates for random access multicast
with retransmissions have also been found in
\cite{ShraderEphremidesITTrans07StabThrpt}. In that work, a
different approach is used in which a Markov chain is used to model
which destinations have received the packet currently under
transmission. The result is identical to the one presented above.

We now compare the stable throughput region for the retransmissions
scheme to the Shannon capacity region. We consider the backlogged
service rate $\mu_{nb}$, since, by Theorem \ref{theorem:throughput},
this is the term that bounds the stable throughput for source $n$.
The expected service time for source 1 when source 2 is backlogged
is bounded as
\begin{eqnarray}
E[T_1] = E[\max_m T_1^{(m)}] & \stackrel{(a)} \geq & \max_m
E[T_1^{(m)}] \\
 & \stackrel{(b)} = & \max_m \frac{1}{p_1 (1-p_2) q_{1|1}^{(m)} + p_1 p_2
q_{1|1,2}^{(m)}} \\
 & = & \frac{1}{\min_m p_1 (1-p_2) q_{1|1}^{(m)} + p_1 p_2
q_{1|1,2}^{(m)}}
\end{eqnarray}
where $(a)$ follows from Jensen's inequality and $(b)$ follows from
the expected value of a geometrically distributed random variable.
Then the backlogged service rate $\mu_{1b}$ is bounded as
\begin{equation}
\mu_{1b} = \frac{1}{E[T_1]} \leq \min_m p_1 (1-p_2) q_{1|1}^{(m)} +
p_1 p_2 q_{1|1,2}^{(m)} \label{eqn:BoundRetransServRate}
\end{equation}
and $\mu_{2b}$ can be bounded similarly. Note that the right-hand
side of (\ref{eqn:BoundRetransServRate}) is equal to the bound on
the Shannon achievable rate $R_1$ given in Corollary
\ref{cor:capacity}. Thus we should expect that the Shannon capacity
region outer bounds the stable throughput region for the
retransmission scheme.

\section{Stable throughput: random linear coding}

In this section we present two different approaches to analyzing the
stable throughput region for a transmission scheme in which groups
of $K$ packets at the front of the queue are randomly encoded and
transmitted over the channel. By encoded, we mean that a random
linear combination of the $K$ packets is formed, and we refer to
this random linear combination as a coded packet. As soon as a
destination has received enough coded packets and is able to decode
the original $K$ packets, it does so and sends an acknowledgement to
the source over its feedback channel. Once the source receives
acknowledgements from both destinations, it removes the $K$ packets
it has been encoding and transmitting from its queue and begins
encoding and transmission of the next $K$ packets waiting in its
buffer. The stable throughput region for a similar system with $K=2$
is found in \cite{SagduyuAE06}.

For a given source, let $s_1,s_2,\hdots,s_K$ denote the $K$ (binary)
packets at the front of the queue. Random linear coding is performed
in the following manner. A coded packet is randomly generated as
\begin{equation}
\sum_{i=1}^{K} a_i s_i \label{eqn:RLCCodedPacket}
\end{equation}
where the coefficients $a_i$ are generated according to
$\mbox{Pr}(a_i=1)=1-\mbox{Pr}(a_i=0)=1/2$, $i=1,2,\hdots,K$, and
$\sum$ denotes modulo-2 addition. A coded packet formed in this way
is the same length as one of the original packets $s_i$ and can be
transmitted over the channel within the same time period of a slot.
For each coded packet sent, the coefficients $a_i$ used in
generating the packet will be appended to the header of the packet.
In each slot, the source generates a coded packet according to
(\ref{eqn:RLCCodedPacket}) and sends it over the channel. Regardless
of whether a previously-generated coded packet has been received,
the source generates a new coded packet in each slot and transmits
it with probability $p_n$.

\subsection{Expected service time}

We first analyze the random linear coding scheme by examining the
expected service time in a manner similar to the approach used for
the retransmission scheme in Section \ref{section:StabilityRetrans}.
Let $\widetilde{T}_n$ denote the service time for source $n$, which
is the number of slots that elapse from the transmission of the
first coded packet until the source has received acknowledgements
from both destinations. Since each completed service in the random
linear coding scheme results in $K$ packets being removed from the
queue, the average service rate is given by $\widetilde{\mu}_n = K/
E[\widetilde{T}_n]$.

The service time will be a random variable dependent upon the random
access transmission probabilities $p_n$, the reception probabilities
$q_{n|n}^{(m)}$ and $q_{n|1,2}^{(m)}$, and the number of coded
packets needed to decode. Let $N_{n}^{(m)}$ denote the number of
coded packets received from source $n$ at destination $m$ before
destination $m$ can decode the original $K$ packets. The
$N_{n}^{(m)}$ will be identically distributed over $n$ and $m$ since
all coded packets are generated in the same way according to
(\ref{eqn:RLCCodedPacket}). Additionally, $N_{n}^{(m)}$ will be
independent over $n$ since the two sources generate their coded
packets independently. However, $N_{n}^{(m)}$ will be correlated
over $m$, since a given source $n$ will generate and transmit the
same coded packets to both destinations $m=1,2$.

For a fixed value of $K$, let $F_K(j)$ denote the common cumulative
distribution function (cdf) of $N_{n}^{(m)}$, or the probability
that the number of coded packets needed for decoding is less than or
equal to $j$. With each coded packet it receives, the destination
collects a binary column of length $K$ in a matrix, where the column
consists of the coefficients $a_i$, $i=1,2,\hdots,K$. Then $F_K(j)$
is the probability that decoding can be performed (by solving a
system of linear equations, or equivalently, by Gaussian
elimination) if the matrix has at most $j$ columns. Thus
\begin{equation}
F_K(j) = \mbox{Pr}\{ \mbox{a random $K {\times} j$ binary matrix has
rank } K\}.
\end{equation}
Note that for $j<K$ the matrix cannot possibly have rank $K$ and
$F_K(j)=0$. For $j \geq K$, we can write an expression for $F_K(j)$
following the procedure in \cite{MacKay98} by first counting up the
number of $K \times j$ non-singular matrices, which is given by
\begin{equation}
(2^j-1)(2^j-2)(2^j-2^2)\hdots(2^j-2^{K-1}).
\end{equation}
In the product above, each term accounts for a row of the $K \times
j$ binary matrix and reflects that the row is neither zero, nor
equal to any of the previous rows, nor equal to any linear
combination of previous rows. Since the total number of $K \times j$
binary matrices is $2^{jK}$, we obtain
\begin{equation}
F_K(j) = \begin{cases} \prod_{i=0}^{K-1} (1-2^{-j+i}) & j\geq K \\ 0
& j<K \end{cases} \label{eqn:cdfN}
\end{equation}
The probability mass function (pmf) of $N_{n}^{(m)}$, or the
probability that decoding can be performed when the destination has
received exactly $j$ columns and no fewer, is given by
\begin{equation}
f_K(j) = F_K(j) - F_K(j-1). 
\label{eqn:pmfNascdfN}
\end{equation}
From (\ref{eqn:cdfN}) and (\ref{eqn:pmfNascdfN}) the expected value
of $N_{n}^{(m)}$ can be computed, where $E[N_{n}^{(m)}] \geq K$. The
ratio
\begin{eqnarray}
\frac{E[N_{n}^{(m)}]}{K} & = & \prod_{i=0}^{K-1} (1-2^{-K+i}) +
\frac{1}{K} \sum_{j=K+1}^{\infty} j \left( \prod_{i=0}^{K-1}
(1-2^{-j+i}) - \prod_{i=0}^{K-1} (1-2^{-j+1+i}) \right) \\
 & \longrightarrow & 1 \quad \mbox{
 as } K \rightarrow \infty
\end{eqnarray}
for all $n,m=1,2$.

With the distribution of $N_{n}^{(m)}$ characterized, we can now
describe the service time from source $n$ to destination $m$, which
we denote $\widetilde{T}_n^{(m)}$. The number of slots needed for
the successful reception of each coded packet will be geometrically
distributed and in total $N_{n}^{(m)}$ coded packets must be
received. Then $\widetilde{T}_n^{(m)}$ can be modeled as the sum of
$N_{n}^{(m)}$ independent geometrically distributed random
variables.
\begin{equation}
\widetilde{T}_n^{(m)} = g_{n,1}^{(m)} + g_{n,2}^{(m)} + \hdots +
g_{n,N_{n}^{(m)}}^{(m)}
\end{equation}
In the above expression, $g_{n,i}^{(m)}$,
$i=1,2,\hdots,N_{n}^{(m)}$, will be geometrically distributed with a
parameter that depends on the reception probabilities and on the
assumption of whether the other source is backlogged. For instance,
when source 2 is backlogged
\begin{equation}
g_{1,i}^{(m)} \sim \mbox{geom} \left(p_1(1-p_2)q_{1|1}^{(m)} +
p_1p_2 q_{1|1,2}^{(m)} \right), \quad i=1,2,\hdots,N_{1}^{(m)}
\end{equation}
and when source 2 is empty,
\begin{equation}
g_{1,i}^{(m)} \sim \mbox{geom} \left(p_1q_{1|1}^{(m)} \right), \quad
i=1,2,\hdots,N_{1}^{(m)}.
\end{equation}
The total service time is given as
\begin{equation}
\widetilde{T}_n = \max_{m} \widetilde{T}_n^{(m)}.
\label{eqn:RLCmaxTnm}
\end{equation}

As in the case of random access with retransmissions, we argue that
the stable throughput region for random access with random linear
coding will be outer bounded by the Shannon capacity region. In the
case that source 2 is backlogged, the expected service time for
source 1 is now bounded as
\begin{eqnarray}
E[\widetilde{T}_1] = E[\max_m \widetilde{T}_1^{(m)}] &
\stackrel{(a)} \geq & \max_m
E[\widetilde{T}_1^{(m)}] \\
 & \stackrel{(b)} = & \max_m  \frac{E[N_{1}^{(m)}]}{p_1 (1-p_2) q_{1|1}^{(m)} + p_1 p_2
q_{1|1,2}^{(m)}} \\
 & \stackrel{(c)}= & \frac{E[N_{1}^{(m)}]}{\min_m p_1 (1-p_2) q_{1|1}^{(m)} + p_1 p_2
q_{1|1,2}^{(m)}}
\end{eqnarray}
where $(a)$ again follows from Jensen's inequality, $(b)$ holds
since $g_{1,i}^{(m)}$ are independent, identically distributed, and
$(c)$ holds since $N_{1}^{(m)}$ is identically distributed over $m$,
meaning that $E[N_{1}^{(1)}]=E[N_{1}^{(2)}]$. The backlogged service
rate $\widetilde{\mu}_{1b}$ is bounded as
\begin{eqnarray}
\widetilde{\mu}_{1b} = \frac{K}{E[\widetilde{T}_1]} & \leq &
\frac{K}{E[N_{1}^{(m)}]} \min_m p_1 (1-p_2)q_{1|1}^{(m)} +
p_1p_2q_{1|1,2}^{(m)} \\
& \leq & \min_m p_1 (1-p_2)q_{1|1}^{(m)} + p_1p_2q_{1|1,2}^{(m)}.
\end{eqnarray}
Then the Shannon capacity region outer bounds the stable throughput
region for random access with random linear coding.

Unfortunately a difficulty arises in finding the service rates
$\mu_{nb}$ and $\mu_{ne}$ in closed form from $E[\max_m
\widetilde{T}_n^{(m)}]$. This difficulty arises for a number of
reasons: $\widetilde{T}_n^{(1)}$ and $\widetilde{T}_n^{(2)}$ are not
independent, and $\widetilde{T}_n^{(m)}$ is distributed according to
a composite distribution function, for which the pdf is not easily
expressed in closed form. In fact, even if these two difficulties
were to be removed, $E[\max_m\widetilde{T}_n^{(m)}]$ cannot be
easily handled. For instance, let us assume that
$\widetilde{T}_n^{(1)}$ and $\widetilde{T}_n^{(2)}$ are independent
and that $N_n^{(m)}=n^{(m)}$ are deterministic (which means that the
pdf is no longer composite). In that case, $\widetilde{T}_n^{(m)}$
is the sum of $n^{(m)}$ iid geometric random variables, meaning that
$\widetilde{T}_n^{(m)}$ follows a negative binomial distribution.
Let us further make the assumption that
$q_{n|n}^{(1)}=q_{n|n}^{(2)}$ and $q_{n|1,2}^{(1)}=q_{n|1,2}^{(2)}$,
which means that $\widetilde{T}_n^{(m)}$ are identically distributed
over $m$. In this very simplified case, $E[\max_m
\widetilde{T}_n^{(m)}]$ is the expected maximum of two iid negative
binomial random variables. The computation of this expected value is
treated in \cite{GrabnerProdinger97}, and the result involves a
periodic function which is approximated by a Fourier series. Thus,
even in this very simplified case, we can at best approximate
$E[\max_m \widetilde{T}_n^{(m)}]$, and this approximation must be
computed numerically.

\subsection{Markov chain approach} \label{subsection:RLCMarkovChain}

As an alternative to the analysis presented above, we now develop a
Markov chain model which allows us to find the queueing service
rates. For a given source node, we set up a vector Markov chain with
state $(i,j,k)$ corresponding to the number of linearly independent
coded packets that have been received from the source node. In this
model, $i$ represents the number of linearly independent coded
packets that have been received at destination 1, and $j$ represents
the number of linearly independent packets that have been received
at destination 2. Since the coded packets are generated by the same
source, some of the coded packets received at destination 1 may also
have been received at destination 2, and $k$ represents the number
of such packets, where $k \leq \min(i,j)$. The variable $k$ allows
us to track the correlation between $N_n^{(1)}$ and $N_n^{(2)}$,
which was a difficulty in our previous approach described above. The
Markov chain evolves in discrete time over the time slots in our
system model.

The state space of the Markov chain is the three-dimensional
discrete set of points $[0, K]^3$. There are $K+1$ absorbing states
given by $(K,K,k)$, $0 \leq k \leq K$, which represent the reception
of $K$ linearly independent coded packets at both destinations, for
which the service of $K$ packets at the source has been completed.
Transitions in the Markov chain can only occur ``upward'',
corresponding to the reception of a new linearly independent packet,
and a transition results in an increase of the indices $i,j,k$ by at
most 1, meaning that at most 1 new linearly independent packet can
be received in a slot. We use the notation  $(i_1,j_1,k_1)
\rightarrow (i_2,j_2,k_2)$ to denote the transition from state
$(i_1,j_1,k_1)$ to state $(i_2,j_2,k_2)$.

The Markov chain is irreducible and aperiodic, and because it has a
finite state space, a stationary distribution exists. Let
$\pi_{i,j,k}$ denote the steady-state probability of $(i,j,k)$. The
steady-state probabilities are found by solving the set of equations
\begin{equation}
\pi_{i_1,j_1,k_1} = \sum_{(i_2,j_2,k_2)} \pi_{i_2,j_2,k_2}
\mbox{Pr}( (i_2,j_2,k_2) \rightarrow (i_1,j_1,k_1) )
\end{equation}
and
\begin{equation}
\sum_{(i,j,k)} \pi_{i,j,k} = 1.
\end{equation}
The service rate $\widetilde{\mu}_{n}$ is equal to $K$ times the
probability of transitioning into an absorbing state $(K,K,k)$, $0
\leq k \leq K$. There are only a few ways to transition into an
absorbing state; let ${\cal A}_k$ denote the set of states which
have a one-step transition into the absorbing state $(K,K,k)$. For
$k \in [0,K-1]$ we have
\begin{equation}
{\cal A}_k = \left\{(K{-}1,K,k), (K{-}1,K,k{-}1), (K,K{-}1,k),
(K,K{-}1,k{-}1), (K{-}1,K{-}1,k{-}1) \right\},
\end{equation}
and for $k=K$,
\begin{equation}
{\cal A}_K = \left\{(K-1,K,K-1), (K,K-1,K-1), (K-1,K-1,K-1)
\right\}.
\end{equation}
Note that we define ${\cal A}_K$ in this way since the states
$(K-1,K,K)$ and $(K,K-1,K)$ violate $k \leq \min(i,j)$. The service
rate for source $n$ is given by
\begin{equation}
\widetilde{\mu}_n = K \sum_{k=0}^{K} \sum_{(i,j,k) \in {\cal A}_k}
\pi_{i,j,k} \mbox{Pr}((i,j,k) \rightarrow (K,K,k)).
\label{eqn:muRLCMC}
\end{equation}
The transition probabilities $(i_1,j_1,k_1) \rightarrow
(i_2,j_2,k_2)$ for source $n$ can be written assuming that the other
source is either backlogged or empty, leading to the service rates
$\widetilde{\mu}_{nb}$ and $\widetilde{\mu}_{ne}$.

As an example, consider the transition $(i,j,k) \rightarrow
(i+1,j,k)$ in the Markov chain for source 1 when source two is
backlogged. Assume first that source two does not transmit, which
happens with probability $1-p_2$. Then there are two ways for the
transition $(i,j,k) \rightarrow (i+1,j,k)$ to occur. First,
destination 2 could receive no packet, which happens with
probability $1-q_{1|1}^{(2)}$, while destination 1 receives a coded
packet which is neither an all-zero packet nor equal to any linear
combination of the $i$ packets it has already received, which
happens with probability $q_{1|1}^{(1)}(1-2^{i}2^{-K})$.
Alternatively, both destinations could receive a coded packet, but
that packet is either the all zero packet or some linear combination
of the packets that have been received by destination 2 and {\it
not} by destination 1. This happens with probability $q_{1|1}^{(1)}
q_{1|1}^{(2)} (2^j - 2^k)2^{-K}$. The same two alternatives are
possible in the case that source 2 does transmit, which happens with
probability $p_2$, except that the reception probabilities are now
given by $q_{1|1,2}^{(m)}$. Then the transition $(i,j,k) \rightarrow
(i+1,j,k)$ for source 1 when source 2 is backlogged occurs with
probability
\begin{multline}
p_1 \left[ \overline{p}_2
\left\{q_{1|1}^{(1)}(1-q_{1|1}^{(2)})(1{-}2^{i}2^{-K}) +
q_{1|1}^{(1)}q_{1|1}^{(2)}(2^j-2^k)2^{-K} \right\} \right. \\
\left. \hspace{0.25in}  {+} p_2 \left\{
q_{1|1,2}^{(1)}(1-q_{1|1,2}^{(2)})(1{-}2^i 2^{-K}) +
q_{1|1,2}^{(1)}q_{1|1,2}^{(2)}(2^j-2^k)2^{-K}\right\} \right].
\end{multline}
The same transition probability can be used when source 2 is empty
by setting $p_2=0$. Similar arguments can be used to find all
transition probabilities for our Markov chain model; we have stated
those probabilities in Appendix \ref{app:RLCMCtransprob}.
Ultimately, we would like to find closed-form expressions for the
service rates $\widetilde{\mu}_{nb}$ and $\widetilde{\mu}_{ne}$, but
due to the size of the state-space, this is a difficult task.
Instead we have computed some numerical examples based on the Markov
chain model presented above, and those are presented next.

\section{Numerical examples} \label{section:NumericalExamples}

\begin{figure}
\centering
\includegraphics[width=0.7\textwidth]{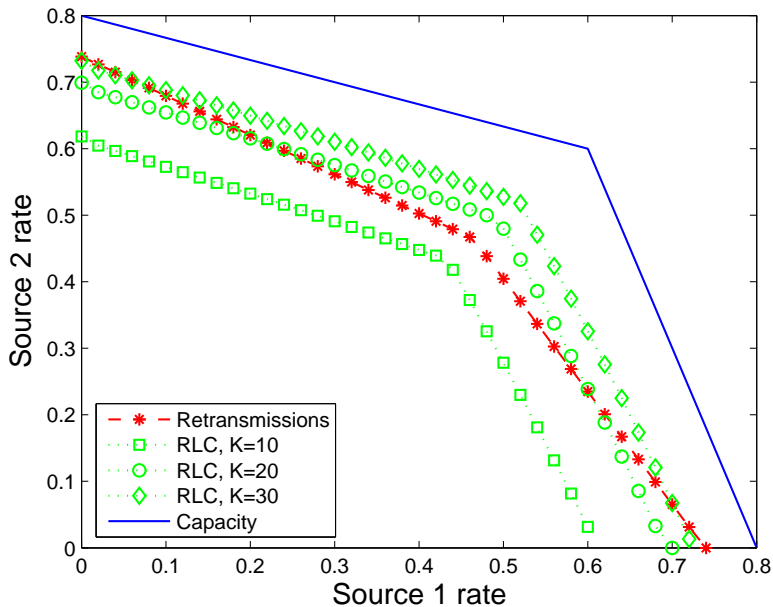}
\caption{Closure of the stable throughput and capacity regions for a
channel with reception probabilities $q_{1|1}^{(1)}=q_{2|2}^{(2)} =
0.8$, $q_{1|1}^{(2)}=q_{2|2}^{(1)}=0.7$, and
$q_{1|1,2}^{(m)}=q_{2|1,2}^{(m)}=0.6, m=1,2$. The stable throughput
region for random linear coding is abbreviated RLC.}
\label{fig:resultsStrongMPR}
\end{figure}

We have computed a number of numerical examples of the Shannon
capacity region and the stable throughput regions for
retransmissions and random linear coding. The results for random
linear coding have been computed with the service rates given by
Equation (\ref{eqn:muRLCMC}). Fig. \ref{fig:resultsStrongMPR} shows
results for a ``good'' channel with relatively large reception
probabilities while Fig. \ref{fig:resultsWeakMPR} shows results for
a ``poor'' channel with smaller reception probabilities. In both
figures, we have plotted the stable throughput for random linear
coding (abbreviated ``RLC'') with various values of $K$. The results
show that the Shannon capacity region is strictly larger than the
stable throughput regions for both the retransmissions and random
linear coding schemes. Additionally, the stable throughput region
for random linear coding grows with $K$ and appears to approach the
capacity as $K \rightarrow \infty$.

The random linear coding scheme does not necessarily outperform the
retransmission scheme. For small values of $K$, the coding scheme is
inefficient in the sense that the ratio $K/E[N_n^{(m)}]$ is small.
This inefficiency is largely due to the fact that an all-zero coded
packet can be generated and transmitted; this possibility is not
precluded in (\ref{eqn:RLCCodedPacket}) and occurs more often for
small values of $K$. The inefficiency of the coding scheme means
that the retransmission of $K$ packets requires a service time that
is less than $E[N_n^{(m)}]$ for small values of $K$. This effect
seems to become more pronounced as the channel improves, since for a
``good'' channel, packets are more often received correctly and do
not need to be retransmitted.

\begin{figure}
\centering
\includegraphics[width=0.7\textwidth]{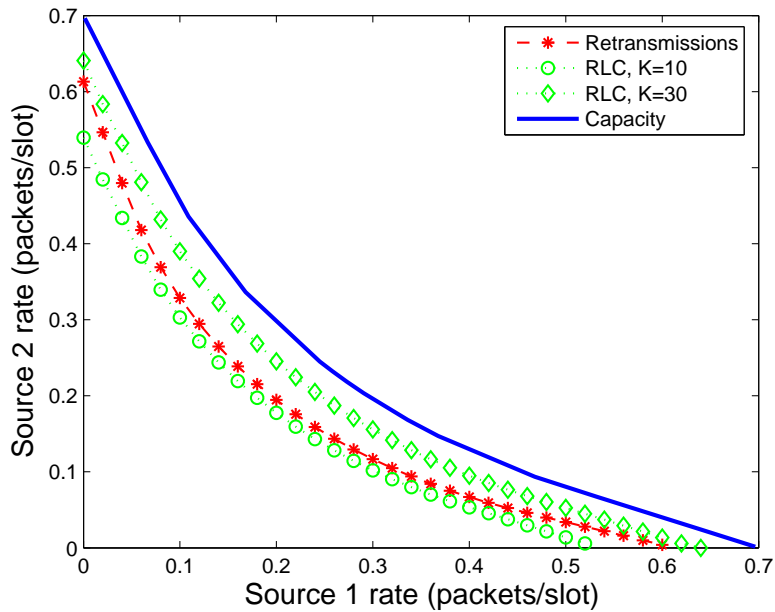}
\caption{Closure of the stable throughput and capacity regions for a
channel with reception probabilities $q_{1|1}^{(1)}=q_{2|2}^{(2)} =
0.8$, $q_{1|1}^{(2)}=q_{2|2}^{(1)}=0.7$, and
$q_{1|1,2}^{(m)}=q_{2|1,2}^{(m)}=0.2, m=1,2$. The stable throughput
region for random linear coding is abbreviated RLC.}
\label{fig:resultsWeakMPR}
\end{figure}

\section{Discussion} \label{section:conclusion}

One conclusion to draw from our work is that for sufficiently large
blocklengths, given by $K$ in our model, coding over packets in a
queue yields a higher stable multicast throughput than simply
retransmitting lost packets. In light of the recent developments on
network coding and fountain codes, this result is not surprising.
However, we have also shown that if the blocklength is too small and
the code is not designed appropriately, then a retransmission scheme
can provide a higher stable throughput than coding. Furthermore, we
have combined random access with random linear coding to yield an
efficient multicast scheme that can operate in a completely
distributed manner. We have shown that the merging of random access
with random linear coding results in good performance in the sense
that the stable throughput approaches the Shannon capacity. In the
process, we have presented a model that represents random coding of
packets in a queue.

Another significant outcome of our work is that we have provided an
example of a transmission policy for which the stable throughput
region {\it does not} coincide with the Shannon capacity region of a
random access system. This outcome contradicts a widely-held (yet
unproved) belief that the stability and capacity regions coincide
for random access. This result sheds further light on the relation
between the stable throughput and the Shannon capacity as
representations for the rate of reliable communication.

%

%

\appendices

\section{Proof of Theorem \ref{thrm:capreg}} \label{app:proofcapreg}
Let $P_m^t$ denote the error probability for the $t^{th}$ message
from the sources at receiver $m$.
\begin{equation*}
P_m^t \triangleq \mbox{Pr} \left\{ \bigcup_n \{J_n^t \neq
\hat{J}_{nm}^t \} \right\}.
\end{equation*}
The following lemma provides a condition equivalent to $P_e^t
\rightarrow 0$ and is used in the proof of Theorem
\ref{thrm:capreg}.
\begin{lemma}
The average error $P_e^t \rightarrow 0$ if and only if $\max_{m}
P_m^t \rightarrow 0$. \label{lemma:errorcond}
\end{lemma}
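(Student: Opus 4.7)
The plan is to prove both directions by simple set containment and the union bound, since $P_e^t$ is the probability of a union of events over $(m,n)$ while $P_m^t$ is the probability of the corresponding union over $n$ for a fixed $m$.

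For the forward direction ($P_e^t \to 0 \Rightarrow \max_m P_m^t \to 0$), I would observe that for every fixed $m$, the event $\bigcup_n \{J_n^t \neq \hat{J}_{nm}^t\}$ is a subset of the larger event $\bigcup_{m'}\bigcup_n \{J_n^t \neq \hat{J}_{nm'}^t\}$. Monotonicity of probability then immediately gives $P_m^t \leq P_e^t$ for each $m$, so $\max_m P_m^t \leq P_e^t$, and the conclusion follows.

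For the reverse direction ($\max_m P_m^t \to 0 \Rightarrow P_e^t \to 0$), I would write $P_e^t = \mathrm{Pr}\bigl\{\bigcup_m A_m\bigr\}$ with $A_m = \bigcup_n \{J_n^t \neq \hat{J}_{nm}^t\}$ so that $\mathrm{Pr}(A_m) = P_m^t$, and then apply the union bound to obtain $P_e^t \leq \sum_m P_m^t \leq M \max_m P_m^t$, where $M=2$ is the number of destinations. Letting $\max_m P_m^t \to 0$ yields $P_e^t \to 0$.

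Since both inequalities $\max_m P_m^t \leq P_e^t \leq M \max_m P_m^t$ are elementary, there is essentially no obstacle; the only thing to be careful about is ensuring that $M$ is finite (here $M=2$), which is what makes the union bound sandwich the two error quantities together and gives the equivalence. The lemma then justifies working with the per-receiver error probabilities $P_m^t$ in the remainder of the capacity proof, which is convenient because each $P_m^t$ corresponds to a single compound multiple access channel $W_m$.
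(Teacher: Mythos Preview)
Your proposal is correct and follows essentially the same approach as the paper: the paper also sandwiches $P_e^t$ via $\max_m P_m^t \leq P_e^t \leq 2\max_m P_m^t$, obtaining the lower bound from monotonicity (since each $P_m^t$ is the probability of a sub-event of the overall error event) and the upper bound from the union bound $P_e^t \leq P_1^t + P_2^t$. Your write-up is slightly more explicit about the set containment and the role of the finite number of destinations, but the argument is identical in substance.
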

\begin{proof}
The average error $P_e^t$ can be upper bounded by the union bound as
follows: $P_e^t  \leq  P_1^t + P_2^t \leq 2 \max_{m} P_m^t$. A
similar lower bound also holds, namely $P_e^t  \geq  \max_{m}
P_m^t$. Thus $\max_{m} P_m^t \leq P_e^t \leq 2 \max_{m} P_m^t$ and
the result follows.
\end{proof}

Achievability for our system is shown by first establishing
achievability for the MAC $W_m$. This is shown in
\cite{HuiHumblet85} and \cite{Poltyrev83}; the approach presented in
\cite{HuiHumblet85} is summarized here. Each codeword symbol in the
codebook for source $n$ is generated according to the distribution
$P(x_n)$, independently over codeword symbols and independently over
messages. The following two properties are assumed.
\begin{enumerate}
\item[(I)] The codewords $x_n(1)$ and $x_n(2)$ are reserved for use
as preambles. A preamble is sent after every sequence of $T$
messages and $x_n(1)$ and $x_n(2)$ are used as preambles in an
alternating fashion. In \cite{HuiHumblet85} it is shown that by
using the preamble, the receiver can synchronize on codeword
boundaries with arbitrarily small probability of synchronization
error.
\item[(II)] In a sequence of $T+1$ messages (including a preamble),
no messages are repeated. As a result, any two disjoint subsets of
$N(T+1)$ codeword symbols (corresponding to $T+1$ messages) are
independent. For $T \ll 2^{NR_n}$ the resulting loss in rate is
negligible.
\end{enumerate}
By observing the channel outputs, the decoder $\phi_{1m}$ can detect
the preambles $x_1(1)$ and $x_1(2)$ to determine that the output
symbols in between correspond to inputs
$x_1(j_1^1),x_1(j_1^2),\hdots,x_1(j_1^T)$. Let $x_1^+$ denote the
sequence of $N(T+1)$ symbols corresponding to
$x_1(j_1^1),x_1(j_1^2),\hdots,x_1(j_1^T)$ augmented by portions of
the preambles $x_1(1)$ and $x_2(2)$. At the channel output, $x_1^+$
will overlap with a sequence $x_2^+$ consisting of $N(T+1)$ symbols
input by source 2, including $N$ preamble symbols. Let
$y_m^{N(T+1)}$ denote the output sequences corresponding to $x_1^+$
and $x_2^+$ at destination $m$. The decoder $\phi_{1m}$ outputs the
unique sequence of messages
$\hat{j}_{1m}^1,\hat{j}_{1m}^2,\hdots,\hat{j}_{1m}^T$ that lies in
the set of typical $(x_1^+,x_2^+,y_m^{N(T+1)})$ sequences. With this
approach it is shown in \cite{HuiHumblet85} that $\mbox{Pr}\{J_1^t
\neq J_{1m}^{t}\} \rightarrow 0$ for all $t$. A similar technique
can be used by decoder $\phi_{2m}$ to show that $\mbox{Pr}\{J_2^t
\neq J_{2m}^{t}\} \rightarrow 0$. Then by the union bound, $P_m^t
\to 0$ for all $t$. Finally, if the rate pair $(R_1,R_2)$ lies in
the intersection of the achievable rates for MACs $W_1$ and $W_2$,
then $\max_m P_m^t \rightarrow 0$ and thus $P_e^t \rightarrow 0$ for
all $t$ by Lemma \ref{lemma:errorcond}.

The converse for the MAC $W_m$ is shown in \cite{HuiHumblet85} and
\cite{Poltyrev83} by using Fano's inequality, the data processing
inequality, and the concavity of mutual information. Then $\max_m
P_m^t \rightarrow 0$ implies that the rate pair $(R_1,R_2)$ must lie
within the intersection of the capacity regions of $W_1$ and $W_2$.

\section{Markov chain analysis of random linear coding} \label{app:RLCMCtransprob}
In the Markov chain analysis of Section
\ref{subsection:RLCMarkovChain}, the state $(i,j,k)$ represents $i$
linearly independent coded packets received at destination 1, $j$
linearly independent coded packets received at destination 2, and
$k$ coded packets which have been received at both destinations, $k
\leq \min(i,j)$. When source 2 is backlogged, the non-zero
transition probabilities for source 1 are given as follows for
$i,j=0,1,\hdots,K$.
\begin{IEEEeqnarray*}{rl}
(i,j,k) \rightarrow (i,j,k) : & \overline{p}_1 {+} p_1 \left[ \overline{p}_2 \left\{ (1{-}q_{1|1}^{(1)})(1{-}q_{1|1}^{(2)}) {+} (1{-}q_{1|1}^{(1)})q_{1|1}^{(2)}2^{j}2^{-K} \right. \right. \\ & \left. \left. \hspace{0.75in} {+} q_{1|1}^{(1)}(1-q_{1|1}^{(2)})2^i 2^{-K} {+} q_{1|1}^{(1)} q_{1|1}^{(2)}2^k 2^{-K} \right\} \right. \\ & \left. \hspace{0.5in} {+} p_2 \left\{ (1{-}q_{1|1,2}^{(1)})(1{-}q_{1|1,2}^{(2)}) {+} (1{-}q_{1|1,2}^{(1)})q_{1|1,2}^{(2)}2^j2^{-K} \right. \right. \\ & \hspace{0.75in} \left. \left. {+} q_{1|1,2}^{(1)}(1-q_{1|1,2}^{(2)})2^i2^{-K} {+} q_{1|1,2}^{(1)} q_{1|1,2}^{(2)}2^k 2^{-K} \right\} \right]\\
(i,j,k) \rightarrow (i+1,j,k): & p_1 \left[ \overline{p}_2 \left\{q_{1|1}^{(1)}(1-q_{1|1}^{(2)})(1{-}2^i2^{-K}) + q_{1|1}^{(1)}q_{1|1}^{(2)}(2^j-2^k)2^{-K} \right\} \right. \\ & \left. \hspace{0.25in}  {+} p_2 \left\{ q_{1|1,2}^{(1)}(1-q_{1|1,2}^{(2)})(1{-}2^i2^{-K}) + q_{1|1,2}^{(1)}q_{1|1,2}^{(2)}(2^j-2^k)2^{-K}\right\} \right]\\
(i,j,k) \rightarrow (i,j+1,k): & p_1 \left[ \overline{p}_2 \left\{ (1-q_{1|1}^{(1)})q_{1|1}^{(2)}(1{-}2^j2^{-K}) + q_{1|1}^{(1)}q_{1|1}^{(2)}(2^i-2^k)2^{-K} \right\} \right. \\ & \left. \hspace{0.25in} {+} p_2 \left\{ (1-q_{1|1,2}^{(1)})q_{1|1,2}^{(2)}(1{-}2^j2^{-K}) + q_{1|1,2}^{(1)}q_{1|1,2}^{(2)}(2^i-2^k)2^{-K}  \right\} \right]\\
(i,j,k) \rightarrow (i+1,j+1,k+1): & p_1 \left[ \overline{p}_2 \left\{ q_{1|1}^{(1)} q_{1|1}^{(2)} (1{-}(2^i+2^j-2^k)2^{-K}) \right\} \right. \\ & \left. \hspace{0.25in}{+} p_2 \left\{ q_{1|1,2}^{(1)} q_{1|1,2}^{(2)} (1{-}(2^i+2^j-2^k)2^{-K}) \right\} \right]\\
\end{IEEEeqnarray*}
\begin{IEEEeqnarray*}{rl}
(i,K,k) \rightarrow (i,K,k): & \overline{p}_1 {+} p_1 \left[ \overline{p}_2 \left\{ (1{-}q_{1|1}^{(1)}) {+} q_{1|1}^{(1)}2^i 2^{-K}\right\}  {+} p_2\left\{ (1{-}q_{1|1,2}^{(1)}) {+} q_{1|1,2}^{(1)}2^i 2^{-K}\right\} \right] \\
(i,K,k) \rightarrow (i+1,K,k): & p_1 \left[ \overline{p}_2 \left\{ q_{1|1}^{(1)}(1{-}(2^i+K-k)2^{-K}) \right\} {+} p_2 \left\{ q_{1|1,2}^{(1)}(1{-}(2^i+K-k)2^{-K}) \right\} \right]\\
(i,K,k) \rightarrow (i+1,K,k+1): & p_1 \left[ \overline{p}_2 \left\{q_{1|1}^{(1)}(K-k)2^{-K} \right\} {+} p_2 \left\{ q_{1|1,2}^{(1)}(K-k)2^{-K} \right\} \right]\\
\end{IEEEeqnarray*}
\begin{IEEEeqnarray*}{rl}
(K,j,k) \rightarrow (K,j,k): & \overline{p}_1 {+} p_1 \left[ \overline{p}_2 \left\{ (1{-}q_{1|1}^{(2)}) {+} q_{1|1}^{(2)}2^j 2^{-K}\right\}  {+} p_2\left\{ (1{-}q_{1|1,2}^{(2)}) {+} q_{1|1,2}^{(2)}2^j 2^{-K}\right\} \right] \\
(K,j,k) \rightarrow (K,j{+}1,k): & p_1 \left[ \overline{p}_2 \left\{ q_{1|1}^{(2)}(1{-}(2^j+K-k)2^{-K}) \right\} {+} p_2 \left\{ q_{1|1,2}^{(2)}(1{-}(2^j+K-k)2^{-K}) \right\} \right]\\
(K,j,k) \rightarrow (K,j{+}1,k{+}1): & p_1 \left[ \overline{p}_2 \left\{q_{1|1}^{(2)}(K-k)2^{-K} \right\} {+} p_2 \left\{ q_{1|1,2}^{(2)}(K-k)2^{-K} \right\} \right]\\
\end{IEEEeqnarray*}

\section*{Acknowledgment}
The authors wish to thank Gerhard Kramer for helpful comments which
greatly improved the manuscript.


\bibliographystyle{IEEEtran}

%

%


\end{document}